\newtheorem{theorem}{Theorem}
\theoremstyle{definition}
\newtheorem{example}{Example}
\begin{document}
\title{\bf {\Large   Geometry of the symplectic group and optimal EAQECC codes}}
\date{}
\author{ Ruihu Li$^{1}$, Yuezhen Ren$^{1}$, Chaofeng Guan $^{1}$and  Yang Liu$^{1*}$\\
1. Department of Basic Science, Air Force Engineering University,
Xi'an, Shaanxi.\\(e-mail: liruihu@aliyun.com;
renyzlw@163.com;liu$\_$yang10@163.com).\\  2. Henan Key Laboratory
of Network Cryptography
Technology, Zhengzhou, Henan.\\(e-mail:gcf2020@yeah.net)\\
} \maketitle
\begin{abstract}
A new link between the geometry of the symplectic group and
entanglement-assisted (EA) quantum error-correcting codes (EAQECCs)
is presented. Relations between symplectic subspaces and quaternary
additive codes concerning the parameters of EAQECCs are described.
Consequently, parameters of EA stabilizer codes are revealed within
the framework of additive codes. Our techniques enable us to solve
some open problems regarding optimal EAQECCs and
entanglement-assisted quantum minimum distance separable (EAQMDS)
codes, and are also useful for designing encoding and decoding
quantum circuits for EA stabilizer codes.

\noindent {\bf Index terms:} additive codes, quantum codes,
entanglement-assisted quantum codes, geometry of symplectic group,
optimal codes.

\end{abstract}

\section{Introduction}
Quantum-error correcting codes (QECCs) can correct
 errors in quantum
communication and quantum computation, and are an indispensable
ingredient for quantum information processing. Since the pioneering
work of Shor and Steane \cite{shor1995scheme, steane1996error},
researchers have focused on finding optimal QECCs. The most studied
class of QECCs are stabilizer quantum codes, also called additive
QECCs or standard quantum codes. Such codes can be constructed from
classical additive codes or linear codes satisfying certain
self-orthogonal properties \cite{gottesman1997stabilizer,
calderbank1998quantum, ketkar2006nonbinary}. Using this constructive
method, a large number of QECCs with suitable parameters have been
obtained, and Grassl et al. summarized these results and established
an online code table of the best-known binary QECCs
\cite{grassl2006bounds}. The self-orthogonal properties form a
barrier to incorporating all classical codes into QECCs
\cite{mackay2004sparse, bowen2002entanglement, brun2006correcting}.

In \cite{brun2006correcting}, Brun, Devetak, and Hsieh devised the
entanglement-assisted (EA) stabilizer formalism, which includes the
standard stabilizer formalism in \cite{gottesman1997stabilizer,
calderbank1998quantum} as a special case. They showed that if shared
entanglement between the encoder and decoder is available, classical
linear quaternary (and binary) codes that are not self-orthogonal
can also be transformed into EAQECCs. EAQECCs constructed via this
EA-stabilizer formalism are named as EA stabilizer codes or additive
EAQECCs. Following \cite{brun2006correcting}, extensive research has
been conducted on constructing additive EAQECCs, optimizing their
parameters, and determining their bounds in \cite{wilde2008optimal,
lai2013entanglement, hsieh2011high, li2014entanglement,
lai2013duality, lai2014dualities, lai2018linear, guo2013linear,
grassl2021entanglement, grassl2022entropic,grassl2022bounds,
luo2022much, huber2020quantum, galindo2019entanglement}.

Known results provide evidence that entanglement enhances the
error-correcting ability of quantum codes \cite{lai2013entanglement}
and confirm the advantage of EA quantum LDPC codes over standard
quantum LDPC codes \cite{hsieh2011high}. Reference
\cite{li2014entanglement} shows that there are infinitely many
impure EAQECCs violating the EA-quantum Hamming bound. Grassl
\cite{grassl2021entanglement} proves that certain types of EAQECCs
violate the EA-quantum Singleton bound obtained in
\cite{brun2006correcting}. References \cite{lai2013duality,
lai2014dualities, lai2018linear, guo2013linear,
grassl2021entanglement, grassl2022entropic} have established
additional bounds for EAQECCs, generalized the EA-quantum Singleton
bound from \cite{brun2006correcting} in various ways, and proposed
open problems regarding optimal EAQECCs and EAQMDS codes. Two of
these problems are as follows:
\begin{itemize}
    \item[(a)] How to determine the optimality of an \([[n, k, d ; c]]_{q}\) code? Currently, even for \(n=5\) and \(q=2\), the minimum distance of the optimal \([[5,2, d ; 3]]_{2}\) EAQECC has not been determined \cite{grassl2022bounds}.
    \item[(b)] What constraints exist on the alphabet size \(q\) for the existence of an EAQMDS \([[n, k, d]]_{q}\) code? A similar issue for QMDS codes was addressed in \cite{huber2020quantum}.
\end{itemize}

It is therefore natural to consider new theories and techniques to
describe EAQECCs, discuss their constructions, and analyze their
optimality \cite{luo2022much, huber2020quantum,
galindo2019entanglement}. Reference \cite{lai2014dualities}
constructed EAQMDS codes \([[n, 1, n ; n-1]]\) for odd \(n\) and
showed that such codes do not exist for even \(n\). In this paper,
we attempt to solve these two open problems using the link between
the geometry of the symplectic group \cite{wan1993geometry} and
EAQECCs:
\begin{enumerate}
    \item First, we characterize the parameters of binary EAQECCs and provide methods to solve Problem (a).
    \item Second, we construct several classes of EAQMDS codes, which partially answer Problem (b).
\end{enumerate}

For QMDS codes, it is known that for \(d \geq 3\), the alphabet size
must satisfy \(n \leq q^2 + d - 2\). Researchers naturally expect a
similar constraint to hold for EAQMDS codes
\cite{grassl2022entropic}. However, we will show that this
conjecture is incorrect, as \(n\) can actually be arbitrarily large
for \(q=2\). To solve Problems (a) and (b), we construct many good
EAQECCs. Here, we only consider binary EA stabilizer codes; for
non-binary EAQECCs, refer to \cite{luo2022much, huber2020quantum,
galindo2019entanglement}.

\section{Symplectic Space, Additive Code and EAQECC}
 Let
\(F_{2}^{2n}\) be the \(2n\)-dimensional binary row vector space
over the binary field \(F_{2} = \{0,1\}\), whose elements are
denoted as \((a | b) = (a_{1}, a_{2}, \cdots, a_{n} | b_{1}, b_{2},
\cdots, b_{n})\). The symplectic weight \(wt_{s}(a | b)\) of \((a |
b)\) is defined as the number of coordinates \(i\) such that at
least one of \(a_{i}\) and \(b_{i}\) is 1, and the symplectic
distance \(d_{s}((a | b),(a' | b'))\) between \((a | b)\) and \((a'
| b')\) is defined as \(wt_{s}(a-a' | b-b')\). Let
\[K_{2n} = \begin{pmatrix} 0 & I_{n} \\ I_{n} & 0 \end{pmatrix}.\]

The symplectic inner product of \((a|b)\) and \((a' | b')\) with
respect to \(K_{2n}\) is defined as \(((a | b),(a' | b'))_{s} = (a |
b) K_{2n} (a' | b')^{T} = a(b')^{T} - b(a')^{T} = a(b')^{T} +
b(a')^{T}\). The space \(F_{2}^{2n}\) equipped with this symplectic
inner product is called a \(2n\)-dimensional symplectic space.

For a subspace \(V\) of \(F_{2}^{2n}\), the symplectic dual
\(V^{\perp_{s}}\) of \(V\) is
\[V^{\perp_{s}} = \left\{(x | y) \in F_{2}^{2n} \mid ((a | b),(x | y))_{s} = 0 \text{ for all } (a | b) \in V\right\}.\]

A subspace \(V\) of \(F_{2}^{2n}\) is called totally isotropic if
\(V \subseteq V^{\perp_{s}}\). Let \(P_{V}\) be a generator matrix
of \(V\) with dimension \(m\). If the rank of the matrix \(P_{V}
K_{2n} P_{V}^{T}\) is \(2c\), \(V\) is called a subspace of type
\((m, c)_{[s]}\). An \((m, c)_{[s]}\) subspace exists in
\(F_{2}^{2n}\) if and only if \(2c \leq m \leq n + c\), and the dual
space of an \((m, c)_{[s]}\) subspace is of type \((2n - m, n + c -
m)_{[s]}\). In particular, a subspace of type \((m, 0)_{[s]}\) is an
\(m\)-dimensional totally isotropic subspace, and the dual space of
a type \((m, 0)_{[s]}\) subspace is of type \((2n - m, n -
m)_{[s]}\) \cite{wan1993geometry}. A subspace of type \((2m,
m)_{[s]}\) in \(F_{2}^{2n}\) is called a \(2m\)-dimensional totally
non-isotropic subspace \cite{wan1993geometry} and a symplectic
subspace in \cite{brun2006correcting}.

Given a subspace \(V\) of \(F_{2}^{2n}\), we define \(R(V) = V \cap
V^{\perp_{s}}\) as the symplectic radical of \(V\) (or
\(V^{\perp_{s}}\)); its dimension \(l = \dim R(V)\) is called the
radical dimension of \(V\). If \(V\) is of type \((m, c)_{[s]}\),
its radical dimension is \(l = m - 2c\), hence \(c = \frac{m -
l}{2}\).

Let \(F_{4} = \{0, 1, \omega, \varpi\}\) be the four-element Galois
field where \(\varpi = 1 + \omega = \omega^{2}\) and \(\omega^{3} =
1\). The conjugation of \(x \in F_{4}\) is \(\bar{x} = x^{2}\), and
the conjugate transpose of a matrix \(G\) over \(F_{4}\) is
\((\bar{G})^{T} = G^{\dagger}\). The Hermitian inner product and
trace inner product of \(u, v \in F_{4}^{n}\) are defined as \((u,
v)_{h} = u \bar{v}^{T} = \sum_{j=1}^{n} u_{j} v_{j}^{2}\) and \((u,
v)_{t} = \text{tr}(u \bar{v}^{T}) = \sum_{j=1}^{n}(u_{j}
\overline{v_{j}} + \overline{u_{j}} v_{j}) = \sum_{j=1}^{n}(u_{j}
v_{j}^{2} + u_{j}^{2} v_{j})\), respectively
\cite{calderbank1998quantum}.

An additive subgroup \(C\) of \(F_{4}^{n}\) is called an additive
code over \(F_{4}\). If \(C\) contains \(2^{m}\) vectors, \(C\) is
denoted as an \((n, 2^{m})\) additive code. A matrix whose rows form
a basis of \(C\) over \(F_{2}\) is called an additive generator
matrix of \(C\). For \(C = (n, 2^{m})_{4}\), its trace dual code is
defined as
\[\mathcal{C}^{\perp_{t}} = \left\{u \in F_{4}^{n} \mid (u, v)_{t} = 0 \text{ for all } v \in \mathcal{C}\right\}.\]

\(C^{\perp_{t}}\) is an \((n, 2^{2n - m})\) additive code
\cite{calderbank1998quantum}, and a generator matrix of
\(C^{\perp_{t}}\) is called an additive parity check matrix of
\(C\). \(C\) is trace self-orthogonal if \(C \subseteq
C^{\perp_{t}}\).

We define an isometric map \(\phi\) from \(F_{2}^{2n}\) to
\(F_{4}^{n}\) as in \cite{calderbank1998quantum}, where \(\phi((a |
b)) = \omega a + \varpi b \in F_{4}^{n}\) for \(v = (a | b) \in
F_{2}^{2n}\). For any subspace \(S\) of \(F_{2}^{2n}\), \(\phi(S)\)
is an additive group (generally, it is not a subspace of
\(F_{4}^{n}\) but a vector space over \(F_{2}\)). If \(S\) is an
\(m\)-dimensional subspace of \(F_{2}^{2n}\), \(C = \phi(S)\) is an
\((n, 2^{m})\) additive code. For a generator matrix \(P_{S}\) of
\(S\), \(G = \phi(P_{S})\) is a generator matrix of \(C\) and \(G
G^{\dagger} + \overline{G G^{\dagger}} = P_{S} K_{2n} P_{S}^{T}\).
If \(S\) is of type \((m, c)_{[s]}\), then \(C\) is said to be of
type \((m, c)_{[t]}\). Thus, its trace radical \(R(C) = C \cap
C^{\perp_{t}}\) has \(\dim R(C) = l = m - 2c\), and
\(C^{\perp_{t}}\) is of type \((2n - m, n + c - m)_{[t]}\). A code
\(C\) of type \((m, 0)_{[t]}\) is trace self-orthogonal. If \(C\) is
an \([n, k]_{4}\) quaternary linear code with generator matrix \(G\)
and \(G G^{\dagger}\) has rank \(e\), then \(C\) is an \((n,
2^{2k})\) additive code and of type \((m, c)_{[t]} = (2k, e)_{[t]}\)
according to \cite{calderbank1998quantum, wilde2008optimal}.

Suppose \(G_{n}\) is the \(n\)-fold Pauli group and \(G_{n} =
\widehat{G}_{n} / \{i^{e} I, e = 0, 1, 2, 3\}\). Let \(S\) be a
subgroup of \(G_{n}\), and \(N(S)\) be the normalizer of \(S\) in
\(G_{n}\). If \(S = S_{I} \times S_{E}\), where \(S_{I}\) is the
isotropic subgroup and \(S_{E}\) is an entanglement (or symplectic)
subgroup, then using the notations of \cite{lai2013duality}, \(N(S)
= L \times S_{I}\) for some entanglement subgroup \(L\). Let the
sizes of \(S\), \(S_{I}\) and \(S_{E}\) be \(2^{m}\), \(2^{l}\) and
\(2^{2c}\) respectively. Then \(N(S)\) has size \(2^{2n - m}\) and
\(L\) has size \(2^{2n - 2m + 2c}\). According to
\cite{calderbank1998quantum, lai2014dualities}, each \(E \in
\widehat{G}_{n}\) has the form \(E = i^{e} X(a) Z(b)\) with \((a |
b) \in F_{2}^{2n}\), and there is an isometric map \(\tau\) from
\(G_{n}\) to \(F_{2}^{2n}\) such that \(\tau(X(a) Z(b)) = (a | b)\).
Let \(S = \tau(S)\), \(S_{I} = \tau(S_{I})\), \(S_{E} =
\tau(S_{E})\). Then \(S = S_{I} \oplus S_{E}\), \(N(S) = \tau(N(S))
= S^{\perp_{s}}\), and the subspaces \(S\), \(S_{I}\), \(S_{E}\),
\(S^{\perp_{s}}\) are of types \((m, c)_{[s]}\), \((l, 0)_{[s]}\),
\((2c, c)_{[s]}\) and \((2n - m, n + c - m)_{[s]}\), respectively.
The following theorem concerns EA stabilizer codes and the duals of
EAQECCs (its equivalent symplectic formalism can be found in
\cite{guo2013linear}).

\begin{theorem}[\cite{brun2006correcting, lai2014dualities}]
Let \(S = S_{I} \times S_{E}\) and \(N(S)\) be as given above. If
the sizes of \(S\), \(S_{I}\) and \(S_{E}\) are \(2^{m}\), \(2^{l}\)
and \(2^{2c}\) respectively, then
\begin{enumerate}
    \item \(S\) EA-stabilizes an EAQECC \(Q = Q(S) = [[n, k, d_{ea} ; c]]\), where \(k = n + c - m = n - c - l\), \(d_{ea} = \min \{wt(g) \mid g \in N(S) \setminus S_{I}\}\). \(S\) is called the EA-stabilizer of \(Q\) and \(N(S)\) is called the EA-normalizer of \(Q\).
    \item \(N(S)\) EA-stabilizes an EAQECC \(Q^{\perp} = Q(N(S)) = [[n, c, d_{ea}^{\perp} ; k]]\), where \(d_{ea}^{\perp} = \min \{wt(g) \mid g \in S \setminus S_{I}\}\). \(Q^{\perp}\) is called the dual of \(Q\), \(N(S)\) is the EA-stabilizer of \(Q^{\perp}\), and \(S\) is the EA-normalizer of \(Q^{\perp}\).
\end{enumerate}
\end{theorem}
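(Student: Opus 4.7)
The plan is to pass to the symplectic picture via the isometry $\tau$ and then prove the two parts in sequence: first establish part 1 by the Brun--Devetak--Hsieh EA encoding construction, then derive part 2 from part 1 by symplectic duality. Throughout, I identify $S$, $S_I$, $S_E$, $N(S)$ with their images in $F_2^{2n}$ as done in the paragraph preceding the theorem; there they are recorded to be subspaces of types $(m,c)_{[s]}$, $(l,0)_{[s]}$, $(2c,c)_{[s]}$ and $(2n-m,n+c-m)_{[s]}$ respectively. Since $S_I = R(S)$, one has $m = l + 2c$, so the two expressions $k = n+c-m = n-c-l$ agree automatically; the real content of part 1 is the dimension formula for the codespace and the characterization of $d_{ea}$.

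For part 1, I would invoke the EA encoding of Brun--Devetak--Hsieh. The commutation structure of the $2c$ generators of $S_E$ is captured by the rank-$2c$ symplectic Gram matrix $P_{S_E} K_{2n} P_{S_E}^T$, which coincides with the commutation structure of $c$ Bell pairs. Sharing $c$ ebits between encoder and decoder therefore lifts the generators of $S_E$ to a commuting set on $n+c$ qubits; appending the $l$ already commuting generators of $S_I$ yields a commuting stabilizer of size $2^m$ on $n+c$ qubits. Its joint $+1$ eigenspace has dimension $2^{n+c-m}=2^k$, which delivers the claimed length, the $c$ ebit pairs, and the $k$ logical qubits. For the minimum distance the standard EA argument applies: a Pauli error is undetectable precisely when it commutes with every element of $S$ (placing it in $N(S)$) yet does not act as the identity on the codespace (excluding $S_I$). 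Hence $d_{ea}=\min\{wt(g)\colon g\in N(S)\setminus S_I\}$.

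For part 2, I would apply part 1 to the group $N(S)$ using the decomposition $N(S)=L\times S_I$ recalled before the theorem, where $L$ is an entanglement subgroup of size $2^{2n-2m+2c}=2^{2k}$. Thus $N(S)$ plays the role of $S$ with isotropic part $S_I$ (still of $F_2$-rank $l$) and symplectic part $L$ (contributing $k$ ebit pairs). Substituting into the formula of part 1 produces a code of length $n$, entanglement parameter $k$, and dimension $n+k-(2n-m)=c$. The minimum distance equals the minimum weight in $N(N(S))\setminus S_I$, and since the symplectic dual is an involution, $\tau(N(N(S)))=(S^{\perp_s})^{\perp_s}=S$, so $N(N(S))=S$ and therefore $d_{ea}^\perp=\min\{wt(g)\colon g\in S\setminus S_I\}$.

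The main obstacle is the EA encoding step inside part 1: one must verify that the Bell-pair extension of $S_E$ produces a commuting, independent set of $m$ stabilizers on $n+c$ qubits whose common eigenspace has the asserted dimension, and that the receiver's syndrome measurements diagnose exactly the errors outside $N(S)$ together with the logical errors in $N(S)\setminus S_I$. This is the heart of the EA stabilizer formalism. Once it is in place, part 2 is essentially formal, because both the size count for $L$ and the involutive identity $N(N(S))=S$ are already baked into the symplectic dictionary summarized in the paragraph preceding the theorem.
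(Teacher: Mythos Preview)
Your proposal is correct and follows exactly the standard route of the cited references: part~1 is the Brun--Devetak--Hsieh EA encoding argument, and part~2 is the Lai--Brun--Wilde duality obtained by applying part~1 to $N(S)=L\times S_I$ together with the involutive identity $N(N(S))=S$. The paper itself does not supply a proof of this theorem; it is quoted from \cite{brun2006correcting, lai2014dualities} and immediately restated in the additive-code language as Theorem~2, so there is nothing further to compare against.
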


A code \(Q = [[n, k, d_{ea} ; c]]\) is pure if there are no
non-identity elements of \(S_{I}\) with weight \(\leq d_{ea}\) and
impure otherwise \cite{lai2014dualities}. However, constructing
codes using Theorem 1 is challenging. Here we restate Theorem 1
using additive codes. Let \(\chi = \phi \circ \tau\), which is an
isometric map from \(G_{n}\) to \(F_{4}^{n}\). Denote \(S(a) =
\chi(S)\), \(S_{I}(a) = \chi(S_{I})\), and \(S_{E}(a) =
\chi(S_{E})\). Then \(S(a) = S_{I}(a) \oplus S_{E}(a)\),
\(\chi(N(S)) = S(a)^{\perp_{t}}\), and the additive codes \(S(a)\),
\(S_{I}(a)\), \(S_{E}(a)\), \(S(a)^{\perp_{t}}\) are of types \((m,
c)_{[t]}\), \((l, 0)_{[t]}\), \((2c, c)_{[t]}\) and \((2n - m, n + c
- m)_{[t]}\), respectively. We can restate Theorem 1 as

\begin{theorem}
If \(C\) is an \((n, 2^{m})\) additive code of type \((m,
c)_{[t]}\), \(C^{\perp_{t}}\) is of type \((2n - m, n + c -
m)_{[t]}\), and \(R_{t}(C) = C \cap C^{\perp_{t}}\) is an \((n,
2^{l})\) additive code, then
\begin{enumerate}
    \item \(C\) EA-stabilizes an EAQECC \(Q = [[n, k, d_{ea} ; c]]\), where \(k = n + c - m = n - c - l\), \(d_{ea} = \min \{wt(g) \mid g \in C^{\perp_{t}} \setminus R_{t}(C)\}\). \(C\) is called the additive EA-stabilizer of \(Q\) and \(C^{\perp_{t}}\) is called the additive EA-normalizer of \(Q\).
    \item \(C^{\perp_{t}}\) EA-stabilizes an EAQECC \(Q^{\perp} = [[n, c, d_{ea}^{\perp} ; k]]\), where \(d_{ea}^{\perp} = \min \{wt(g) \mid g \in C \setminus R_{t}(C)\}\). \(Q^{\perp}\) is called the dual of \(Q\), \(C^{\perp_{t}}\) is the additive EA-stabilizer of \(Q^{\perp}\), and \(C\) is the additive EA-normalizer of \(Q^{\perp}\).
\end{enumerate}
\end{theorem}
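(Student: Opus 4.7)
The plan is to derive Theorem~2 directly from Theorem~1 by transporting every object and every claim across the composite isometry $\chi=\phi\circ\tau:G_n\to F_4^n$, using the fact that the three settings — the symplectic geometry of $F_2^{2n}$, the commutation geometry of the Pauli group $G_n$, and the trace geometry of $F_4^n$ — are already identified in a weight- and inner-product-preserving way in the paragraphs preceding the theorem. Concretely, $\tau$ sends symplectic inner product on $F_2^{2n}$ to (the additive encoding of) commutation in $G_n$ and sends Pauli weight to symplectic weight, while $\phi$ sends the symplectic form to the trace form and satisfies $wt_s(a\mid b)=wt_H(\omega a+\varpi b)$ coordinatewise; consequently $\phi(V^{\perp_s})=\phi(V)^{\perp_t}$ for every subspace $V$, and a subspace is of type $(m,c)_{[s]}$ if and only if its image is an additive code of type $(m,c)_{[t]}$.

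Given an additive code $C$ of type $(m,c)_{[t]}$, I would first pull it back: set $\mathcal{S}=\chi^{-1}(C)\subseteq G_n$ and $S=\tau(\mathcal{S})=\phi^{-1}(C)\subseteq F_2^{2n}$, which is of type $(m,c)_{[s]}$. Then the symplectic radical $R(S)=S\cap S^{\perp_s}$ is totally isotropic of dimension $l=m-2c$ (type $(l,0)_{[s]}$), and I pick a symplectic complement $S_E$ of type $(2c,c)_{[s]}$ so that $S=R(S)\oplus S_E$. Lifting through $\tau^{-1}$, write $\mathcal{S}=\mathcal{S}_I\times\mathcal{S}_E$ with $\mathcal{S}_I=\tau^{-1}(R(S))$ the isotropic subgroup and $\mathcal{S}_E=\tau^{-1}(S_E)$ the entanglement subgroup. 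Theorem~1 now applies verbatim to $\mathcal{S}$ and produces an EAQECC $[[n,k,d_{ea};c]]$ with $k=n+c-m=n-c-l$, which is exactly the parameter count claimed in part~(1).

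Next I would translate the distance formula and the dual statement. Since $\chi$ is a bijective isometry, weights transfer and so do dual pairings: $\chi(N(\mathcal{S}))=\chi(\tau^{-1}(S^{\perp_s}))=\phi(S^{\perp_s})=C^{\perp_t}$, and similarly $\chi(\mathcal{S}_I)=\phi(R(S))=\phi(S\cap S^{\perp_s})=C\cap C^{\perp_t}=R_t(C)$. Therefore the Theorem~1 expression $d_{ea}=\min\{wt(g):g\in N(\mathcal{S})\setminus\mathcal{S}_I\}$ becomes $d_{ea}=\min\{wt(v):v\in C^{\perp_t}\setminus R_t(C)\}$, which is the content of part~(1). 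Part~(2) follows by applying the same argument to $C^{\perp_t}$, using $(C^{\perp_t})^{\perp_t}=C$ and the stated type $(2n-m,n+c-m)_{[t]}$ of $C^{\perp_t}$ to read off $[[n,c,d_{ea}^{\perp};k]]$.

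\noindent\textbf{Main obstacle.} The conceptual work lies not in chasing the two numerical identities $k=n+c-m=n-c-l$ (which are immediate from $l=m-2c$), but in verifying that the \emph{intrinsic} object $R_t(C)=C\cap C^{\perp_t}$ appearing in the restated theorem really does coincide with the image of the isotropic subgroup $\mathcal{S}_I$ arising from a (non-canonical) decomposition $\mathcal{S}=\mathcal{S}_I\times\mathcal{S}_E$. The key point I would emphasize is that $\mathcal{S}_I$ is forced to be $\tau^{-1}(R(S))$ because any isotropic direct complement of $\mathcal{S}_E$ inside $\mathcal{S}$ must lie in the symplectic radical, and dimensions force equality; once this canonical identification is in place, Theorem~2 is a faithful rewriting of Theorem~1 and no further computation is needed.
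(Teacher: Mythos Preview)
Your proposal is correct and follows the same approach as the paper: Theorem~2 is presented there simply as a restatement of Theorem~1, with the preceding paragraph introducing the isometry $\chi=\phi\circ\tau$ and recording that $\chi(N(\mathcal S))=S(a)^{\perp_t}$ and that the images of $\mathcal S,\mathcal S_I,\mathcal S_E,N(\mathcal S)$ have the claimed types. You spell out more carefully than the paper does the identification $\chi(\mathcal S_I)=R_t(C)$, which is indeed the only non-formal step; the paper leaves this implicit.
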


In particular, if \(C\) is an \([n, k]_{4}\) linear code of type
\((2k, c)_{[t]}\), then \(C\) can generate two EAQECCs: \(Q = [[n, n
+ c - 2k, d_{ea} ; c]]\) and \(Q^{\perp} = [[n, c, d_{ea}^{\perp} ;
n + c - 2k]]\).

\section{Bounds of EAQECCs} EA-Singleton bound in
\cite{brun2006correcting} says:

 An \([[n, \kappa, \delta; c]]\) entanglement-assisted quantum
error-correcting code (EAQECC) satisfies
\[
\kappa \leq c + n - 2\delta + 2 \tag{1}
\]
This bound holds for all pure EAQECCs and all EAQECCs with \(\delta
- 1 \leq n/2\) \cite{lai2018linear}, but fails for some impure ones
with \(\delta - 1 \geq n/2\) \cite{grassl2021entanglement}.

The EA-Singleton bounds for an \([[n, \kappa, \delta; c]]\) EAQECC
\(\mathcal{Q}\) presented in \cite[Corollary 9]{grassl2022entropic}
are:
\[
\kappa \leq c + \max\{0, n - 2\delta + 2\} \tag{1'}
\]
\[
\kappa \leq n - \delta + 1 \tag{2}
\]
\[
\kappa \leq \frac{(n - \delta + 1)(c + 2\delta - 2 - n)}{3\delta - 3
- n}, \quad \text{if } \delta - 1 \geq n/2 \tag{3}
\]

To our knowledge, most known families of EAQECCs can achieve bound
(1) when \(\delta - 1 \leq n/2\). Additionally, some EAQECCs with
\(\delta - 1 \geq n/2\) have been constructed from classical MDS
codes (see \cite{grassl2022bounds, luo2022much, huber2020quantum}
and references therein). Focusing on additive EAQECCs constructed
via Theorem 1, whose dimension \(\kappa\) is an integer-bound. Thus
(3) can be rewritten as:
\[
\kappa \leq \left\lfloor \frac{(n - \delta + 1)(c + 2\delta - 2 -
n)}{3\delta - 3 - n} \right\rfloor, \quad \text{if } \delta - 1 \geq
n/2 \tag{3'}
\]

A code with extremal parameters satisfying the EA-Singleton bounds
(1), (2), and (3') is called an EAQMDS code, while a code meeting
bounds (1) and (2) is said to \emph{saturate} the EA-Singleton
bound.

\begin{example}
From the MDS linear codes \([5, 2, 4]_{4}\), \([n, 1, n]_{4}\) for
even \(n\), and the MDS additive codes \((7, 2^{3}, 6)\) and \((8,
2^{5}, 6)\) given in \cite{blokhuis2004small, guo2017construction},
using these codes as additive EA-normalizers, one can obtain the
[[5, 0, 4; 1]], \([[n, 0, n ; n - 2]]\) codes for even \(n\), and
the [[7, 0, 6; 4]] and [[8, 0, 6; 3]] EAQMDS codes from Theorem 2.
\end{example}

\begin{example}
Non-existence of [[5, 2, 4; 3]]. If \(Q = [[5, 2, 4; 3]]\) exists,
its additive EA-normalizer is a \((5, 2^{4})\) additive code of type
\((m, c)_{[t]} = (4, 2)_{[t]}\) by Theorem 2, which must be an MDS
code. According to \cite{ball2025griesmer}, there is only one \((5,
2^{4}, 4)\) MDS code up to equivalence, and this code is of type
\((4, 0)_{[t]}\), which is a contradiction. Thus, the known [[5, 2,
3; 3]] EAQECC in \cite{grassl2022bounds} is optimal.
\end{example}
\section{Constructions of EAQECCs} To construct \([[n, \kappa,
\delta ; c]]\) codes with \(\delta - 1 \geq n/2\) and \(c \geq 1\),
let \(0_{m} = (0, 0, \cdots, 0)\) and \(1_{m} = (1, \cdots, 1)\)
denote the all-zero and all-one vectors of length \(m\),
respectively. For a linear code \(C = [n, k]_{4}\) with generator
matrix \(G = (a_{i,j})\) of size \(k \times n\), its
additive generator matrix is \(G_{a} = \begin{pmatrix} \omega G \\
\varpi G \end{pmatrix}\) of size \(2k \times n\). We use \(G =
(a_{i,j})_{[L]}\) and \(B = (b_{i,j})_{[A]}\) to denote that \(G\)
is the generator matrix of a quaternary linear code and \(B\) is an
additive generator matrix of an additive code, respectively.
According to Theorem 2,  we can obtain the following result based on
classical additive codes.

\begin{theorem}
For \(m \geq 0\), the following EAQECCs saturate the EA-Singleton
bound:
\begin{enumerate}
    \item If \(n \geq 4\), there exists an \([[n, 1, n - 1 ; n - 3]]\) code.
    \item If \(n \geq 5\) is odd, there exists an \([[n, 1, n - 2 ; n - 5]]\) code.
    \item If \(s \geq 1\) and \(n = 8s + 1 + 2m \geq 9\), there exists an \([[n, 1, n - 2s ; n - 4s - 1]]\) code.
    \item If \(s \geq 1\) and \(n = 8s + 4 + 2m \geq 12\), there exists an \([[n, 1, n - 2s - 1 ; n - 4s - 3]]\) code.
\end{enumerate}
\end{theorem}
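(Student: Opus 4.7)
The plan is to construct each EAQECC explicitly as the code associated via Theorem~2 to a small additive code over $F_4$ serving as the additive EA-normalizer. For $Q=[[n,1,\delta;c]]$, the EA-normalizer $D$ is an $(n,2^{n+1-c})$ additive code of type $(n+1-c,1)_{[t]}$; equivalently its trace radical $R_t(D)$ has $F_2$-dimension $n-1-c$, the quotient $D/R_t(D)$ is two-dimensional carrying a single hyperbolic pair under $(\cdot,\cdot)_t$, and by Theorem~2 applied with stabilizer $C=D^{\perp_t}$ the distance of $Q$ equals $\min\{wt(g):g\in D\setminus R_t(D)\}$. For the four cases $D$ has $F_2$-dimensions $4$, $6$, $4s+2$, $4s+4$, so the distance requirement becomes a concrete computation on a small code.

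In each case I would exhibit an additive generator matrix of the form
\[
G_a=\begin{pmatrix}R\\ v_3\\ v_4\end{pmatrix}_{[A]},
\]
where $R$ generates the trace-self-orthogonal radical $R_t(D)$ and $v_3,v_4\in F_4^n$ are trace-orthogonal to every row of $R$ but satisfy $(v_3,v_4)_t=1$; the pair $\{v_3,v_4\}$ then descends to a hyperbolic basis of $D/R_t(D)$, and the distance requirement reduces to $wt(v_3+r),\ wt(v_4+r),\ wt(v_3+v_4+r)\geq\delta$ for every $r\in R_t(D)$. For Cases~1 and~2 the radical is small and $R$ can be built from the constant vectors $\omega\mathbf{1}_n,\varpi\mathbf{1}_n$ (which span a trace-self-orthogonal plane precisely when $n$ is even, since $(\omega\mathbf{1}_n,\varpi\mathbf{1}_n)_t\equiv n\pmod 2$), supplemented in Case~2 by two extra radical generators accommodating $n$ odd; the pair $v_3,v_4$ can be read off from an MDS-type evaluation code of $F_4$-dimension $2$, producing the required weights $\delta=n-1$ or $n-2$ in the three nontrivial cosets.

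For Cases~3 and~4 the radical grows linearly in $s$, so the matrix $R$ is assembled from $s$ disjoint blocks of length $8$, each contributing a fixed trace-self-orthogonal bundle of radical generators, while the residual $1$ or $4$ coordinates and the $2m$ padding coordinates carry the hyperbolic pair $v_3,v_4$. The forms $n=8s+1+2m$ and $n=8s+4+2m$ precisely encode this decomposition. Verification that $D$ has type $(n+1-c,1)_{[t]}$ is a Gram-matrix calculation of $(v_i,v_j)_t$ under the trace form, and Theorem~2 then delivers $Q$.

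The main obstacle is the weight analysis in Cases~3 and~4: since $R_t(D)$ has $2^{4s}$ or $2^{4s+2}$ elements, a direct enumeration of $wt(v_i+r)$ is infeasible. The argument will proceed block-by-block, exploiting that each radical generator is supported in a single length-$8$ block so its effect on $v_i$ can be bounded locally; summing over the $s$ blocks plus the residual and padding then yields $wt(v_i+r)\geq n-2s$ or $n-2s-1$, matching the claimed $\delta$. Choosing $v_3,v_4$ so that within each block the worst-case cancellation by the block's radical subspace is at most $2$ is precisely the design principle that forces the congruence conditions and the specific distances appearing in the statement.
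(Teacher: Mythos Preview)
Your global framework is exactly the paper's: realise the EA-normalizer $D$ as a small additive code of type $(\dim D,1)_{[t]}$, split it as radical $R_t(D)$ plus one hyperbolic pair, and read the distance off as $\min\{wt(g):g\in D\setminus R_t(D)\}$. Where your proposal goes wrong is the choice of radical generators.

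In Cases~1--2 you propose putting the constant vectors $\omega\mathbf 1_n,\varpi\mathbf 1_n$ into $R_t(D)$. If $\alpha\mathbf 1_n\in R_t(D)$ for all $\alpha\in F_4$, then for any coset representative $v$ one needs $wt(v+\alpha\mathbf 1_n)=n-|\{i:v_i=\alpha\}|\ge n-1$ for every $\alpha$, i.e.\ each element of $F_4$ occurs at most once among the coordinates of $v$. That forces $n\le 4$, so your construction cannot produce $[[n,1,n-1;n-3]]$ for any $n\ge 5$; invoking an ``MDS-type evaluation code of $F_4$-dimension $2$'' does not help, since quaternary MDS codes of dimension $2$ have length at most $5$. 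The same obstruction hits Case~2.

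The paper does the opposite: it takes radical generators of \emph{small} support (weight~$4$, concentrated in a single block) and a coset representative of \emph{large} support. Concretely, for even $n$ the $F_4$-linear code generated by $(1,1,1,1\,|\,0_{2m})$ and $(0,1,\omega,\varpi\,|\,1_{2m})$ has the first row as radical; adding any $\alpha\cdot(1,1,1,1,0,\ldots,0)$ to any $\beta\cdot(0,1,\omega,\varpi,1,\ldots,1)$ with $\beta\neq 0$ merely permutes the first four coordinates within $F_4$, so the weight stays exactly $n-1$. Cases~3--4 repeat this idea with $2s$ (resp.\ $2s{+}1$) blocks of length~$4$: the radical rows are $\mathbf 1_4$ in one block and $0$ elsewhere, and the last row is $(0,1,\omega,\varpi)$ in every block. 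Because $\beta\cdot(0,1,\omega,\varpi)+\alpha\cdot(1,1,1,1)$ always has weight~$3$ when $\beta\neq 0$, every element of $D\setminus R_t(D)$ has the \emph{same} weight $n-2s$ (resp.\ $n-2s-1$), and the ``infeasible'' enumeration you worried about collapses to a one-line calculation. So the missing idea is not a sharper bound on block-wise cancellation but the specific pair $\bigl(\mathbf 1_4,(0,1,\omega,\varpi)\bigr)$ that makes the coset weights constant.
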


\begin{proof}
\begin{enumerate}
    \item If \(n = 4 + 2m \geq 4\) is even, let
    \[G_{2,4} = \begin{pmatrix} 11110 \\ 01\omega\varpi \end{pmatrix}_{[L]}, \quad G_{2,n} = \begin{pmatrix} 1111 | 0_{2m} \\ 01\omega\varpi | 1_{2m} \end{pmatrix}_{[L]}\]

    \(G_{2,4}\) generates a \([4, 2, 3]\) linear code of type \((4, 1)_{[t]}\), and \(G_{2,n}\) generates a \(C_{n} = [n, 2, 4]\) linear code of type \((4, 1)_{[t]}\). The weight enumerator of \(C_{n}\) is \(W(t) = 1 + 3z^{4} + 12z^{n-1}\) and that of \(C_{n} \cap C_{n}^{\perp_{t}}\) is \(W_{R}(t) = 1 + 3z^{4}\). Thus, \(C_{n}\) normalizes an \([[n, 1, n - 1 ; n - 3]]\) EAQECC.

    If \(n = 5 + 2m > 5\) is odd, let
    \[G_{4,5} = \begin{pmatrix} 11000 \\ 00110 \\ \omega\varpi011 \\ 01\varpi\omega \end{pmatrix}_{[A]}, \quad G_{4,n} = \begin{pmatrix} 11000 & 0_{2m} \\ 00110 & 0_{2m} \\ \omega\varpi011 & 1_{2m} \\ 01\varpi\omega & \omega1_{2m} \end{pmatrix}_{[A]}\]

    \(G_{4,5}\) generates a \((5, 2^{4}, 3)\) additive code of type \((4, 1)_{[t]}\), and \(G_{4,n}\) generates a \(C_{n} = (n, 2^{4})\) additive code of type \((4, 1)_{[t]}\). The weight enumerator of \(C_{n}\) is \(W(t) = 1 + 2z^{2} + z^{4} + 12z^{n-1}\) and that of \(C_{n} \cap C_{n}^{\perp_{t}}\) is \(W_{R}(t) = 1 + 2z^{2} + z^{4}\). Thus, \(C_{n}\) normalizes an \([[n, 1, n - 1 ; n - 3]]\) EAQECC.

    \item If \(n = 5 + 2m \geq 5\) is odd, let \(G_{3,5}\) generate a \([5, 3, 3]_{4}\) linear code of type \((6, 1)_{[t]}\), where
    \[G_{3,5} = \begin{pmatrix} 10111 \\ 011\omega\varpi \\ 00\varpi\omega1 \end{pmatrix}_{[L]}, \quad G_{3,n} = \begin{pmatrix} 10111 | 0_{2m} \\ 011\omega\varpi | 0_{2m} \\ 00\varpi\omega1 | 1_{2m} \end{pmatrix}_{[L]}\]

    \(G_{3,n}\) generates a \(C_{n} = [n, 3, 4]_{4}\) linear code of type \((6, 1)_{[t]}\). The weight enumerator of \(C_{n}\) is \(W(t) = 1 + 15z^{4} + 48z^{n-2}\) and that of \(C_{n} \cap C_{n}^{\perp_{t}}\) is \(W_{R}(t) = 1 + 15z^{4}\). Thus, \(C_{n}\) normalizes an \([[n, 1, n - 2 ; n - 5]]\) EAQECC.

    \item If \(n = 8s + 1 + 2m \geq 9\) is odd, let \(A = 1_{4}\), \(B = 0_{4}\), \(D = (0, 1, \omega, \varpi)\), and construct the \((2s + 1) \times n\) matrix
    \[G_{2s+1,n} = \begin{pmatrix} AB\cdots B & 0_{2m+1} \\ BA\cdots B & 0_{2m+1} \\ \vdots & \vdots \\ BB\cdots A & 0_{2m+1} \\ DD\cdots D & 1_{2m+1} \end{pmatrix}_{[L]}\]

    \(G_{2s+1,n}\) generates a \(C_{n} = [n, 2s + 1, 4]\) linear code of type \((2(2s + 1), 1)_{[t]}\), and its first \(2s\) rows generate \(R(C_{n}) = C_{n} \cap C_{n}^{\perp_{t}}\). The weight enumerator of \(R(C_{n})\) is \(W_{R}(t) = 1 + a_{4}z^{4} + a_{8}z^{8} + \cdots + a_{8s}z^{8s}\), and that of \(C_{n}\) is \(W(t) = W_{R}(t) + 3 \times 4^{2s}z^{n-2s}\). Thus, \(C_{n}\) normalizes an \([[n, 1, n - 2s ; n - 4s - 1]]\) EAQECC.

    \item If \(n = 8s + 4 + 2m \geq 12\) is even, let \(A = 1_{4}\), \(B = 0_{4}\), \(D = (0, 1, \omega, \varpi)\), and construct the \((2s + 2) \times n\) matrix
    \[G_{2s+2,n} = \begin{pmatrix} AB\cdots B & 0_{2m} \\ BA\cdots B & 0_{2m} \\ \vdots & \vdots \\ BB\cdots A & 0_{2m} \\ DD\cdots D & 1_{2m} \end{pmatrix}_{[L]}\]

    \(G_{2s+2,n}\) generates a \(C_{n} = [n, 2s + 2, 4]\) linear code of type \((2(2s + 2), 1)_{[t]}\), and its first \(2s + 1\) rows generate \(R(C_{n}) = C_{n} \cap C_{n}^{\perp_{t}}\). The weight enumerator of \(R(C_{n})\) is \(W_{R}(t) = 1 + a_{4}z^{4} + a_{8}z^{8} + \cdots + a_{8s+4}z^{8s+4}\), and that of \(C_{n}\) is \(W(t) = W_{R}(t) + 3 \times 4^{2s+1}z^{n-2s-1}\). Thus, \(C_{n}\) normalizes an \([[n, 1, n - 2s - 1 ; n - 4s - 3]]\) EAQECC.
\end{enumerate}
\end{proof}

It is easy to verify that all these EAQECCs saturate the
EA-Singleton bound, and the codes in class (1) are EAQMDS codes.
Except for the [[4, 1, 3; 1]] and [[5, 1, 3; 0]] codes, the others
are new and impure.
\section{Conclusion}
We have established an additive EA stabilizer formalism for EAQECCs
and their duals through the induced link between the geometry of the
symplectic group and EAQECCs, which is equivalent to the formalisms
in \cite{brun2006correcting, lai2014dualities}. This formalism
enables researchers to easily construct EAQECCs from any classical
additive code and may be used to derive sharp bounds for additive
EAQECCs and analyze their optimality. Moreover, this formalism can
be generalized to non-binary EAQECCs using known formalisms in
\cite{grassl2022bounds, luo2022much, galindo2019entanglement} and
group theory in \cite{wan1993geometry}.

We have proposed constructions of many good EAQECCs, disproven a
conjecture about EAQECCs, and illustrated the process of analyzing
the optimality of EAQECCs with an example. Based on
\cite{blokhuis2004small, guo2017construction, ball2025griesmer,
guan2023some}, we have constructed over 60 optimal EAQECCs and some
EAQECCs with better parameters than the best-known ones in
\cite{grassl2022entropic}, which will be presented in
\cite{liu2023good}. The additive EA stabilizer formalism is also
useful for designing encoders and decoders, as demonstrated in
\cite{lai2014dualities}, and may be applied to study physically
realizable high-performance EAQECCs. These topics will be
interesting directions for future research in quantum computation
and quantum information.
\section*{Acknowledgment}
This work is supported by  National Natural Science Foundation of
China under Grant No.U21A20428 and Natural Science Foundation of
Shaanxi Province under Grant Nos.2024JC-YBMS-055 and
2025JC-YBQN-070.

\end{document}